\newcommand{\Cbf}{\mathit{Cbf1}}
\newcommand{\Ash}{\mathit{Ash1}}
\newcommand{\Swi}{\mathit{Swi5}}
\newcommand{\GalF}{\mathit{Gal4}}
\newcommand{\GalE}{\mathit{Gal80}}
\newcommand{\gal}{\mathit{gal}}
\newcommand{\cl}[1]{\mathcal{#1}}
\newtheorem{prop}{Prop.}
\newcommand{\sign}{\mathit{sign}}
\newcommand{\AG}{\mathbf{AG}}
\newcommand{\AF}{\mathbf{AF}}
\newcommand{\AX}{\mathbf{AX}}
\newcommand{\EX}{\mathbf{EX}}
\newcommand{\EF}{\mathbf{EF}}
\newcommand{\low}{\mathit{low}}
\newcommand{\high}{\mathit{high}}
\begin{document}
\RRNo{7284}
\makeRR   

\section{Introduction}

A central problem in the analysis of biological regulatory networks
concerns the relation between their structure and dynamics. This
problem can be narrowed down to the following two questions: (a) Is a
hypothesized structure of the network consistent with the observed
behavior? (b) Can a proposed structure generate a desired behavior?

Qualitative models of regulatory networks, such as (synchronous or
asynchronous) Boolean models and piecewise-affine differential
equation (PADE) models, have been proven useful for answering the
above questions. 
The models are coarse-grained, in the sense that they do not specify the biochemical mechanisms in
detail. However, they include the logic of gene regulation and allow
different expression levels of the genes to be distinguished. 
They are interesting in their own right, as a way to capture in a simple 
manner even complex dynamics induced by the structure of interactions,
for example steady states and transient responses to external perturbations. 
They can also be used as a first step to orient the development of more
fine-grained quantitative ODE models. Several applications of logical
and PADE models have confirmed their interest for the study of large
and complex regulatory networks \cite{gb97,gb76,HdJ2964,HdJKlamt}.

Qualitative models bring specific advantages over numerical models when studying the relation
between structure and dynamics. In order to answer questions (a) and
(b), one has to search the parameter space to check if for some
parameter values the network can be consistent with the data or a
desired control objective can be attained. 
In qualitative models the number of different parametrizations is finite 
and the number of possible values for each parameter is usually rather low.
This makes parameter search easier to handle than in quantitative models, where exhaustive search
of the continuous parameter space is in general not
feasible. Moreover, much of the available data in biology is
semi-quantitative rather than fully quantitative due to variability in the experimental conditions and biological
material, imprecise and relative measurements, low sampling density,
... Qualitative models are more concerned with qualitative trends in
the data rather than with precise quantitative values.

Nevertheless, the parametrization of qualitative models remains a
complex problem. For large models, the state and parameter spaces are
usually too large to test all combinations of parameter values using
existing techniques. This makes it difficult to answer questions (a)
and (b) for most networks of actual biological interest. The aim of
this paper is to address this search problem for PADE models by
treating it in the context of formal verification and symbolic model checking.
More specifically, we formulate the dynamic properties in temporal logic
and verify by means of a model checker if the network satisfies these properties  \cite{HdJ2003,HdJ2963}.

Our contributions are twofold. 
On the methodological side, we develop a method that in comparison with our previous work \cite{HdJ2427} makes it possible to analyze
very efficiently models with a large state space, and even to analyze incompletely parametrized models without the need for exhaustive enumeration of all parametrizations.
 This is achieved by a symbolic encoding of the model structure, the constraints on parameter values (if available), and the transition rules
describing the qualitative dynamics of the PADE models. We can thus take full advantage of symbolic model checkers for testing the consistency of the network structure
with dynamic properties expressed in temporal logics. 
The current version 8 of GNA~\cite{HdJMonteiro} has been extended with export functionalities to generate the symbolic encoding of PADE models in the NuSMV language~\cite{gr71}.
In comparison with related work \cite{gb137,HdJ2379, gb136,gbFromentin}, our method applies to incompletely instead of fully parametrized models, provides more precise results, and the  encoding is efficient without (strongly) simplifying the PADE dynamics.

On the application side, we show that the method performs well on real problems, by means of the IRMA synthetic network and benchmark experimental data sets
\cite{HdJCantone}. More precisely, we are able to find parameter values for which the network satisfies temporal-logic properties describing observed expression
profiles, both on the level of individual and averaged time-series. 
The method is selective in the sense that only a small part of the parameter space is found to be compatible with the observations. 
Analysis of these parameter values reveals that biologically-relevant constraints have been identified. 
Moreover, we make suggestions to improve the robustness of the external control of the IRMA behavior by proposing a rewiring of the network.

\section{Qualitative model of IRMA network}
\label{sec:model}

\subsection{IRMA network}
\label{sec:model,network}

IRMA is a synthetic network constructed in yeast and proposed as a
benchmark for modeling and identification approaches
\cite{HdJCantone}. The network consists of five well-characterized
genes that have been chosen so as to include different kinds of
interactions, notably transcription regulation and protein-protein
interactions.  The endogenous copies of the genes were deleted, so as
to reduce crosstalk of IRMA with the regulatory networks of
the host cell. In order to further isolate the synthetic network from
its cellular environment, the genes belong to distinct and
non-redundant pathways. Moreover, they are non-essential, which means
that they can be knocked out without affecting yeast viability.

The structure of the IRMA network is shown in
Fig.~\ref{fig:network}(a). The expression of the \textit{CBF1} gene is
under the control of the \textit{HO} promoter, which is positively
regulated by Swi5 and negatively regulated by Ash1. \textit{CBF1}
encodes the transcription factor Cbf1 that activates expression of the
\textit{GAL4} gene from the \textit{MET16} promoter. 
The \textit{GAL10} promoter is activated by Gal4, but only  in the absence of Gal80 or in the presence of galactose which releases the inhibition of Gal4 by Gal80.
The \textit{GAL10} promoter controls the expression of \textit{SWI5},
whose product not only activates the above-mentioned \textit{HO}
promoter, but also the \textit{ASH1} promoter 
controling the expression of both the \textit{GAL80} and \textit{ASH1} genes.

Notice that the network contains both positive and negative feedback
loops. Negative feedback loops are a necessary condition for the
occurrence of oscillations \cite{HdJThomas}, while the addition of
positive feedback loops is believed to increase the robustness of the
oscillations \cite{gb134}. This suggests that the network structure, for
suitable parameter values, might be able to function as a synthetic
oscillator.

\begin{figure}
\mbox{(a)} \includegraphics[width=0.80\linewidth]{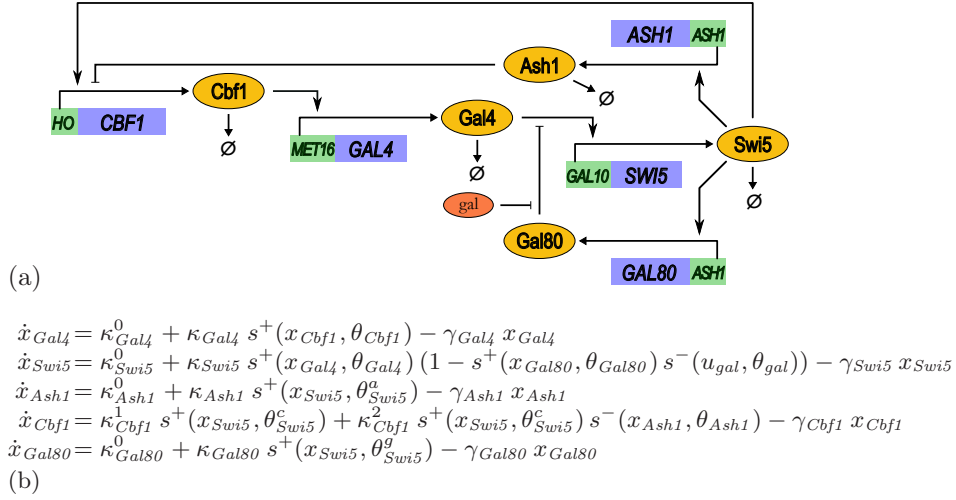}\\
\mbox{}\\
\begin{small}
$\begin{array}[b]{@{}r@{}l} 
\dot{x}_\GalF & = \kappa_\GalF^0 + \kappa_\GalF\, s^{+}(x_\Cbf,\theta_\Cbf) - \gamma_\GalF\, x_\GalF \\ 
\dot{x}_\Swi & = \kappa_\Swi^0 + \kappa_\Swi\, s^{+}(x_\GalF,\theta_\GalF) \, (1-s^{+}(x_\GalE,\theta_\GalE) \,s^{-}(u_\gal,\theta_\gal)) - \gamma_\Swi\, x_\Swi \\ 
\dot{x}_\Ash & = \kappa_\Ash^0 + \kappa_\Ash\, s^{+}(x_\Swi,\theta_\Swi^a) - \gamma_\Ash\, x_\Ash \\
\dot{x}_\Cbf & = \kappa_\Cbf^1\,s^{+}(x_\Swi,\theta_\Swi^c) + \kappa_\Cbf^2\, s^{+}(x_\Swi,\theta_\Swi^c) \,s^{-}(x_\Ash,\theta_\Ash) -\gamma_\Cbf\, x_\Cbf \\ 
\dot{x}_\GalE & = \kappa_\GalE^0 + \kappa_\GalE\, s^{+}(x_\Swi,\theta_\Swi^g) - \gamma_\GalE\, x_\GalE 
\end{array}$
\mbox{(b)}
\end{small}
\caption{Synthetic IRMA network in yeast. (a) Schematic representation of the network constructed in \cite{HdJCantone}. 
The green and blue boxes are promoter and genes, and the yellow and red ovals are proteins and metabolites. 
(b) PADE model of IRMA, with state variables $x$, protein synthesis constants $\kappa$, decay constants $\gamma$, and thresholds $\theta$. The input variable $u_\gal$ refers to the presence of galactose ($\dot{u}_\gal=0$). 
The subscripts $_\GalF$, $_\Swi$, $_\Ash$, $_\Cbf$, $_\GalE$ refer to the proteins with the same name.} 
\label{fig:network}
\end{figure}

\subsection{Measurements of IRMA dynamics}
\label{sec:model,data}

The behavior of the network has been monitored  in response to two
different perturbations \cite{HdJCantone}: shifting cells from glucose
to galactose medium (switch-on experiments), and from galactose to
glucose medium (switch-off experiments). The terms 'switch-on'
('switch-off') refer to the activation (inhibition) of \textit{SWI5} expression
during growth on galactose (glucose). For these two perturbations, the
temporal evolution of the expression of all the genes in the network
was monitored by qRT-PCR with good time resolution:
samples every 10 min (switch-off) or 20 min (switch-on) over more than 3 h.

Fig.~\ref{fig:average_data}(a) represents the expression of all genes, averaged over 5 (switch-on) or 4 (switch-off) independent experiments. 
In the switch-off experiments (galactose to glucose), the transcription of all genes is shut down. In the switch-on experiments, a seemingly oscillatory
behavior is present with Swi5 peaks at 40 and 180 min, while Swi5, Cbf1, and Ash1 are expressed at moderate to high levels \cite{HdJCantone}. 
We remark that the confidence intervals are large (not shown), which means that the data are essentially semi-quantitative.

\begin{figure}
\includegraphics[width=\linewidth]{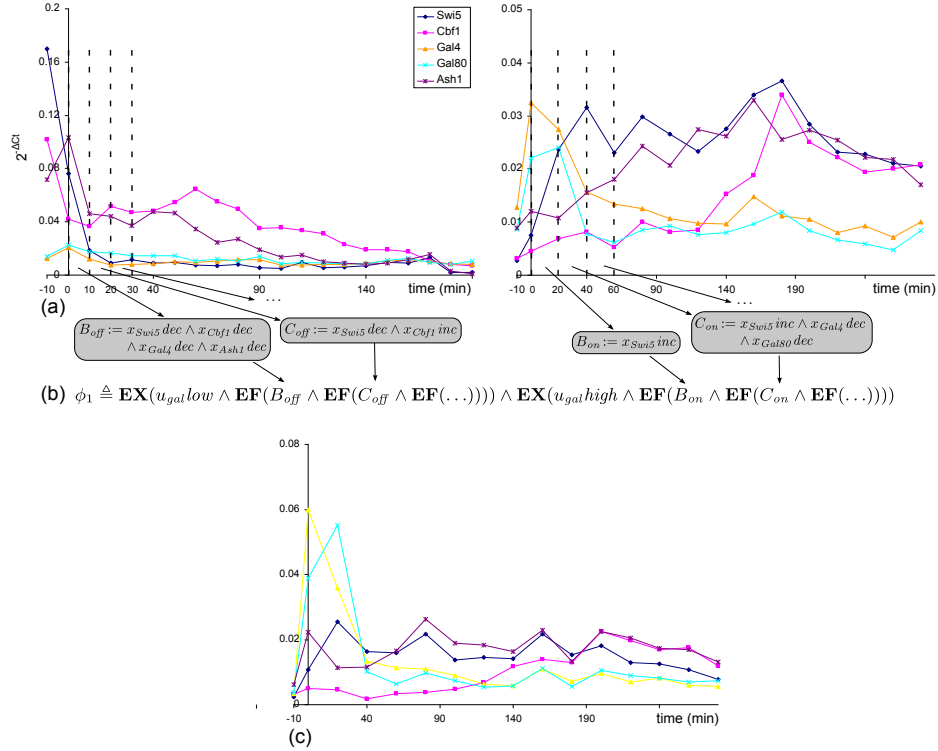}
\caption{Dynamic behavior of the IRMA network in response to medium shift perturbations. 
(a) Temporal profiles of averaged gene expression measured with qRT-PCR during switch-off (left) and switch-on (right) experiments (data from \cite{HdJCantone}). 
(b) Temporal logic encoding of the switch-off and switch-on behaviors.
The operator $\EF \, \phi$ expresses the possibility to reach a future state satisfying $\phi$, whereas the operator $\EX \, \phi$ is used to require the existence
of an initial state satisfying $\phi$. $u_{gal} \, \low$ and $u_{gal} \, \high$ denote the absence and presence of galactose, respectively. 
See \cite{HdJ2003} for more details on the temporal logic CTL. 
(c) Temporal profile of gene expression in an individual switch-on experiment showing a switch-off-like behavior. 
} 
\label{fig:average_data}
\end{figure}

The analysis of the individual time-series reveals that in some
cases the gene expression profiles are indeed similar, at least
qualitatively, whereas in other cases notable differences are observed 
(for example, the oscillatory behavior is not present in all switch-on time-series; Fig.~\ref{fig:average_data}(c)). 
In the latter case, average expression levels may be a misleading representation of the network behavior.

\subsection{PADE model of IRMA network}
\label{sec:model,PADE}

We built a qualitative model of the IRMA dynamics using PADE models of
genetic regulatory networks. The PADE models, originally introduced in
\cite{HdJ1030}, provide a coarse-grained picture of the network dynamics. 
They have the following general form:

\begin{equation}
\dot{x}_i= f_i(x) \triangleq \sum_{l\in L_i}\kappa_i^l\; b_i^l(x) - \gamma_i\; x_i, \ i \in [1,n]
\label{eq:diffeq}
\end{equation}
where $x \in \Omega \subset \Bbb{R}_{\geq 0}^n$ represents a vector of $n$ protein (or RNA) concentrations. The synthesis rate is composed of a sum of synthesis
constants $\kappa_i^l$, each modulated by a regulation function $b_i^l(x)\in \{0,1\}$. A regulation function is an algebraic expression of
step functions $s^+(x_j,\theta_j)$ or $s^-(x_j,\theta_j)$ which formalizes the regulatory logic of gene expression. $\theta_j$ is a so-called threshold for the
concentration $x_j$. The step function $s^+(x_j,\theta_j)$ evaluates to 1 if $x_j>\theta_j$, and to 0 if $x_j<\theta_j$, thus capturing the switch-like character of
gene regulation ($s^-(x_j,\theta_j)=1-s^+(x_j,\theta_j)$). The degradation of a gene product is a first-order term, with a degradation constant $\gamma_i$ that includes
contributions of growth dilution and protein degradation. The models can be easily extended to account for proteolytic regulators, but we will omit this here as IRMA
does not include such factors.

In the case of IRMA, we define five variables that correspond to the total protein concentrations of Cbf1, Gal4, Gal80, Ash1, and Swi5, as well as an input variable
denoting the concentration of galactose.
Notice that the measurements of the network dynamics concern mRNA and not protein levels. We assume that the variations
in mRNA and protein levels are the same, even though this may not always be the case. A similar approximation is made in \cite{HdJCantone}, where protein and mRNA
levels are considered to be proportional.

The PADE model of the IRMA network is shown in Fig.~\ref{fig:network}(b). 
Consider for example the equation for the protein Gal4.
$\kappa_\GalF^0$ is its basal synthesis rate, and $\kappa_\GalF^0 + \kappa_\GalF$ its maximal synthesis rate when the \textit{GAL4} activator Cbf1 is present (\textit{i.e.}, $x_\Cbf>\theta_\Cbf$). 
Swi5 is regulated in a more complex way. 
The expression of its gene is activated by Gal4, but 
only when Gal80 is absent or galactose present (which prevents Gal4 inactivation by Gal80), that is, only when not both Gal80 is present and galactose absent.
The step-function expression in Fig.~\ref{fig:network}(b) mathematically describes this condition.
We remark that for the regulation of \textit{CBF1}, we take into account that Ash1 can override the effect of Swi5, that is, inhibition dominates activation. Moreover, Swi5 is assumed to have three different thresholds, for the regulation of \textit{CBF1}, \textit{GAL80}, and \textit{ASH1}.

The PADE model is a direct translation of the IRMA network into a simple mathematical format. The model resembles the ODE model in \cite{HdJCantone}, but notably approximates
the Hill-type kinetic rate laws by step functions. It thus makes the implicit assumption that important qualitative dynamical properties of the network are intimately
connected with the network structure and the regulatory logic, independently from the details of the kinetic mechanisms and precise parameter values. Several studies
have shown this assumption to be valid in a number of model systems \cite{gb97,HdJBornholt,gb122}, although care should be exercised in deciding exactly when
modeling approximations are valid \cite{HdJPolynikis}.

To investigate for the possible existence of unknown interactions between the synthetic network and the host,
we would like to test given the PADE model above whether the network structure and the regulatory logic can account for the qualitative trends in the gene expression data observed in \cite{HdJCantone}.
Because in some experiments it has been observed that the addition of galactose does not always lead to an activation of the IRMA genes, we also search for parameter modifications that renders the network response to an addition of galactose more robust.

\section{Search of parameter space using symbolic model checking}
\label{sec:model_checking}

\subsection{Qualitative analysis of PADE models}

The advantage of PADE models is that the qualitative dynamics of high-dimens-ional systems are relatively easy to analyze, using only the total order on parameter values rather than exact numerical values \cite{HdJ2725,HdJEdwards}. 
The main difficulty lies in treating the discontinuities in the right-hand side of the differential equations, at the threshold values of the step functions. Following \cite{HdJ1899}, the use of differential inclusions based on Filippov solutions has been proposed in \cite{HdJ2725} and implemented in the computer tool GNA \cite{HdJ2427}.
Here, we recast this analysis in a form that underlies the symbolic encoding of the dynamics below.

The key to our reformulation of the qualitative analysis of the PADE dynamics is the extension of step functions $s^+$
to interval-valued functions $S^+$ , where
\begin{equation}
S^+(x_j,\theta_j)= \left\{\begin{array}[c]{l}
\mbox{}[0,0] \mbox{ if } x_j <\theta_j,\\
\mbox{}[0,1] \mbox{ if } x_j =\theta_j,\\
\mbox{}[1,1] \mbox{ if } x_j >\theta_j
\end{array}\right.
\end{equation}
That is, because the step functions are not defined at their thresholds, we conservatively assume that they can take any value between 0 and 1 (see \cite{gb97}
for a similar idea). When replacing the step functions by their extensions, the regulation functions $b_i^l(x)$ become interval-valued functions
$B_i^l\,:\,\Bbb{R}_{\geq 0}^n \rightarrow \{[0,0],[0,1],[1,1]\}$, and Eq.~(\ref{eq:diffeq}) generalizes to the following differential inclusion:

\begin{equation} \dot{x}_i \in F_i(x) \triangleq \sum_{l\in L_i}\kappa_i^l\;
B_i^l(x) - \gamma_i\; x_i, \ i \in [1,n]
\label{eq:diffincl}
\end{equation}
As shown in Section~\ref{sec:comparison}, for most models the solutions of this differential inclusion are the same as the solutions of the differential inclusions defined in \cite{HdJ2725}.

The starting-point for our qualitative analysis is the introduction of a rectangular partition $\cl{D}$ of the state space $\Omega$. This partition is induced  by the union of the two sets $\Theta_i$ and $\Lambda_i$, $i\in [1,n]$, where $\Theta_i = \{ \theta_i^j \mid j \in J_i \}$ and $\Lambda_i = \{ \sum_{l\in B} \kappa_i^l/\gamma_i
\mid B \subseteq L_i\}$. That is, the partition is a rectangular grid defined by the threshold parameters $\theta_i^j$ and the so-called focal parameters $\sum_{l\in B}
\kappa_i^l/\gamma_i$. The focal parameters are steady-state concentrations towards which the PADE system locally converges in a monotonic way \cite{HdJ1030}. 
For the variable $x_\GalF$, we have $\Theta_\GalF = \{ \theta_\GalF\}$ and $\Lambda_\GalF = \{0, \kappa_\GalF^0/\gamma_\GalF,(\kappa_\GalF^0 +
\kappa_\GalF)/\gamma_\GalF \}$. 

Interestingly, the partition has the property that in each domain $D\in \cl{D}$, the protein production rates are identical: for all
$x,y\in D$, it holds that $B_i^l(x)=B_i^l(y)\triangleq B_i^l(D)$. 
As a consequence, the derivatives of the concentration variables have a unique sign
pattern: for all $x,y\in D$, it holds that $\sign(F_i(x))=\sign(F_i(y))\subseteq \{-1,0,1\}$  \cite{HdJ2725}. 
Notice that this property is not obtained for less fine-grained partitions
used in related work \cite{gb137,HdJ2379,gb97,gb136,gb76,gbFromentin,gb138}. 
It will be seen to be critical for the search of parametrized models of IRMA that satisfy the time-series data.

The above considerations motivate a discrete abstraction, resulting in a state transition graph. 
In this graph, the states are the domains $D \in \cl{D}$,
and there is a transition from a domain $D$ to another domain $D'$, if there exists a solution of the differential inclusion in Eq. \ref{eq:diffincl} that starts in $D$ and reaches $D'$, without leaving $D \cup D'$. 
$$D\rightarrow D' \mbox{ iff }
\begin{array}[t]{l}
\exists \xi \mbox{ solution of } (\ref{eq:diffincl}), \exists \tau\in \mathbb{R}_{\neq 0}\cup \{\infty\}\\
\mbox{such that } \xi(0)\in D, \xi(\tau)\in D', \mbox{ and } \forall  t\in [0,\tau], \xi(t)\in D\cup D'
\end{array}$$
The state transition graph defines the qualitative dynamics of the system, in the sense that the states give a qualitative description of the state of the system (derivative sign patterns, threshold and focal parameters bounding the domain), while paths in the state transition graph describe how this state evolves over time (changes in derivative patterns, changes in bounds of domain) \cite{HdJ2725}.

We reformulate here the transition rules using the interval extensions of the regulation functions.
The existence of a transition depends on the sign of $F$ at the boundary between the two domains.
To capture this notion, we introduce an interval-valued function $F_i\,:\,\cl{D}\times\cl{D} \rightarrow 2^{\Bbb{R}}$, where $F_i(D,D')=\sum_{l\in L_i}\kappa_i^l\; B_i^l(D) - \gamma_i\, D'_i$, for $D,D' \in \cl{D}$. $F_i(D,D')$ represents the flow in $D$ infinitely close to $D'$.
In order to evaluate $F_i(D,D')$, we use interval arithmetic \cite{HdJ1385}.
For instance, in a domain in which $x_\Swi>\theta_\Swi^c$ and $x_\Ash=\theta_\Ash$, we have $S^{+}(x_\Swi,\theta_\Swi^c)=[1,1]$ and $S^{-}(x_\Ash,\theta_\Ash)=[0,1]$, so that the differential inclusion for $x_\Cbf$ becomes $[\kappa_\Cbf^1,\; \kappa_\Cbf^1+\kappa_\Cbf^2] - \gamma_\Cbf \cdot x_\Cbf$. 

We distinguish three types of transitions, depending on whether the transition goes from a domain $D$ to itself ($D=D'$, internal transition), from a domain
$D$ to another, higher-dimensional domain $D'$ ($D \subseteq \partial D'$, dimension-increasing transition), or from a domain $D$ to another, lower-dimensional domain $D'$ ($D' \subseteq \partial D$, dimension-decreasing transition), 
where $\partial D$ denotes the boundary of $D$ in its supporting hyperplane.
For dimension-increasing transition, we obtain the following rule:

\begin{prop}[Dimension-increasing transition]\rm Let $D,D'\in \cl{D}$ and $D \subseteq \partial D'$. $D \rightarrow D'$ is a dimension-increasing transition iff
\begin{enumerate}
\item $\forall i\in [1,n]$, such that $D_i$ and $D'_i$ coincide with a value in $\Theta_i \cup \Lambda_i$, it holds that $0\in F_i(D',D)$, and
\item $\forall i\in [1,n]$, such that $D_i\neq D'_i$, it holds that $\exists \alpha>0$ such that $\alpha \in F_i(D',D) \, (D'_i - D_i)$
\end{enumerate}
\label{prop:diminc}
\end{prop}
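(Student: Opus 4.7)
The plan is to prove both directions of the iff by analyzing solutions of the differential inclusion~(\ref{eq:diffincl}) in the interior of $D'$ near the shared face $D \subseteq \partial D'$. The central observation is that each $B_i^l$ is constant on $D'$, so the inclusion restricted to $D'$ decouples into the interval-affine equations $\dot{x}_i \in C_i - \gamma_i x_i$ with $C_i = \sum_{l\in L_i} \kappa_i^l B_i^l(D')$, and $F_i(D', D) = C_i - \gamma_i D_i$ is exactly the set of admissible values of $\dot{x}_i$ evaluated at the boundary value $x_i = D_i$. The coordinates split into three types along the boundary: (i) $D_i = D'_i$ is a common singleton in $\Theta_i \cup \Lambda_i$ (condition~1 case); (ii) $D_i \neq D'_i$, so the point $D_i$ lies on the boundary of the interval $D'_i$ (condition~2 case); and (iii) $D_i = D'_i$ is a common open interval, for which no constraint is imposed.

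For the $(\Rightarrow)$ direction, suppose $\xi$ realizes the transition with $\xi(0) \in D$ and $\xi(t) \in D'$ for $t \in (0, \epsilon)$. By absolute continuity $\dot\xi_i(t) \in C_i - \gamma_i \xi_i(t)$ for almost every $t$, and passing to $t \to 0^+$ yields $\dot\xi_i(0^+) \in F_i(D', D)$ in every coordinate of type (i) or (ii). In type~(i) coordinates, $\xi_i(t) \equiv D_i$ since $D'_i = \{D_i\}$, forcing $\dot\xi_i = 0$ and hence $0 \in F_i(D', D)$. In type~(ii) coordinates, $\xi_i$ moves from $D_i$ into the open interval $D'_i$, so $\dot\xi_i(0^+)$ must have the sign of $D'_i - D_i$; equivalently, some product $\alpha = \dot\xi_i(0^+)\,(D'_i - D_i)$ is positive, which is exactly condition~2.

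For the $(\Leftarrow)$ direction, I would construct an explicit admissible solution. Use conditions~1 and~2 to pick a selection $c_i^\star \in C_i$ with $c_i^\star = \gamma_i D_i$ in type~(i) coordinates (available since $0 \in F_i(D', D)$) and $\sign(c_i^\star - \gamma_i D_i) = \sign(D'_i - D_i)$ in type~(ii) coordinates (available by condition~2). Pick any $\xi(0) \in D$; the linear ODE $\dot{x}_i = c_i^\star - \gamma_i x_i$ admits the explicit solution $\xi_i(t) = c_i^\star/\gamma_i + (\xi_i(0) - c_i^\star/\gamma_i)\,e^{-\gamma_i t}$. By construction $\xi_i$ remains equal to $D_i$ in type~(i) coordinates (an equilibrium of the chosen selection) and, for sufficiently small $t > 0$, enters the interior of $D'_i$ in type~(ii) coordinates, while continuity keeps $\xi_i$ inside the open interval $D'_i$ in type~(iii) coordinates.

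The main obstacle is ensuring that the coordinate-wise selections $c_i^\star$ are mutually consistent as a single selection of the original inclusion: the only step functions $S^\pm(x_k, \theta_k)$ contributing nondegenerate intervals to any $C_i$ are those with $D'_k = \{\theta_k\}$, i.e., precisely the type~(i) indices $k$. The freedom required to realize $c_k^\star = \gamma_k D_k$ in equation~$k$ is exactly the freedom to choose a value in $S^\pm(x_k, \theta_k) \in [0,1]$, and this same chosen value then propagates coherently to every other $C_i$ in which that step function appears. Once this bookkeeping is verified, the constructed $\xi$ is a genuine selection of~(\ref{eq:diffincl}) realizing $D \to D'$, closing the equivalence.
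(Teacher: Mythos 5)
Your overall architecture coincides with the paper's: the same three-way split of the coordinates, necessity argued by examining what a realizing trajectory must do in each coordinate, and sufficiency by integrating explicit affine selections $\dot{x}_i = c_i^\star - \gamma_i x_i$ (the paper takes the maximal selection $c_i^\star=\gamma_i\overline{\lambda}_i$ in the type~(ii) coordinates, an inessential difference). Your closing worry about cross-coordinate consistency of the selections is in fact a non-issue: the inclusion (\ref{eq:diffincl}) is stated component-wise, so a function is a solution as soon as $\dot{\xi}_i(t)\in F_i(\xi(t))$ holds for each $i$ separately, and no common valuation of the step functions across the $n$ equations is required.

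There is, however, one step in your necessity argument that does not hold as written: the claim that, in a type~(ii) coordinate, $\dot{\xi}_i(0^+)$ exists and ``must have the sign of $D'_i-D_i$''. Solutions of the inclusion are only absolutely continuous, so this one-sided derivative need not exist; and even when it does, a trajectory can enter the open interval $D'_i$ with vanishing initial derivative (think of $t\mapsto D_i+t^2$). What your limiting argument actually yields is only the non-strict inequality $\sup\, F_i(D',D)\,(D'_i-D_i)\geq 0$, whereas condition~2 asserts the strict one, and the borderline case where this supremum equals $0$ is precisely what has to be excluded. The paper closes this by exploiting the affine structure inside $D'$: writing $F_i(x')/\gamma_i=[\underline{\lambda}_i,\overline{\lambda}_i]-x'_i$ and supposing (without loss of generality $D'_i-D_i>0$) that $\overline{\lambda}_i-D_i\leq 0$, one gets $\overline{\lambda}_i-x'_i<0$ \emph{strictly} for every $x'\in D'$, since $x'_i>D_i$ there; hence $\xi_i$ is strictly decreasing as long as $\xi(t)\in D'$, contradicting $\xi_i(t)>\xi_i(0)=D_i$. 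Patch your necessity step with this monotonicity argument and the equivalence is established; the rest of your proof is sound.
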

The first condition guarantees that solutions can remain in domains located in threshold and focal planes, while the second condition expresses that the
direction of the flow in the domains ($F_i(D',D)$) is consistent with their relative position ($D'_i - D_i$). 
The rules for other types of transitions and their proofs can be found in Section~\ref{sec:tr}.

It can be shown that exact parameter values are not needed for the analysis of the qualitative dynamics of a PADE model: it is sufficient to know the
ordering of the threshold and focal parameters \cite{HdJ2725}. This comes from the fact that the sign of $F_i$, and hence the transitions and the state transition graph, are invariant for regions of the parameter space defined by a particular total order on $\Theta_i \cup \Lambda_i$ \cite{HdJ2725}. We call such a total order a parametrization of the PADE model.

\subsection{Search of parameter space: a model-checking approach}

Verifying the compatibility of the network structure with an observed or desired behavioral property (Section~\ref{sec:model,data}) can be achieved by comparing the state transition graph with qualitative trends in the data. For large graphs like that obtained for
IRMA (which has about 50000 states) this becomes quickly impossible to do by hand. This has motivated the use of model-checking tools (\textit{e.g.},
\cite{gb137,HdJ2427,HdJ2379,HdJ2963}). For PADE models, each state in the graph is described by atomic propositions whose truth-value are preserved under the discrete abstraction, such as the
above-mentioned derivative sign patterns. 
The atomic propositions are used to formulate observed or desired properties in a temporal-logic formula $\phi$ and model
checkers test if the state transition graph $T$ satisfies the formula ($T\models \phi$). 

Because the number of possible parametrizations and the size of state transition graphs rapidly grow with the number of genes, the naive approach consisting in enumerating all parametrizations of a PADE model, and for each of these generating the state transition graph and testing whether $T\models \phi$, is only feasible for the simplest networks.
We therefore propose an alternative approach, based on the symbolic encoding of the above search problem, without explicitly generating the possible parametrizations of the PADE models and the corresponding state transition graphs. 
This enables one to exploit the capability of symbolic model checkers to efficiently manipulate implicit descriptions of the state and parameter space.

\subsection{Symbolic encoding of PADE model and dynamics}
\label{sec:model_checking,symbolic_encoding}

In this section we summarize the main features of the encoding . 
We particularly focus on the discretization of the state space, which connects the symbolic encoding to the mathematical analysis of PADE models, and the use of the discretization for the computation of $F_i(D',D)$ in Prop.~\ref{prop:diminc}, 
which is essential for state transition computations.

The symbolic encoding is based on a discretization of the state space implied by the partition $\cl{D}$. 
We call $\cl{C}$ a discretization function that maps $D\in \cl{D}$ to a set of unique integer coordinates, and $\cl{C}(D)= \cl{C}(D_1) \times \ldots \times \cl{C}(D_n)$. 
Let $m_i$ be the number of non-zero parameters in $\Theta_i \cup \Lambda_i$, $i\in [1,n]$.
Then $\cl{C}(D_i) \in \{ 0, 1, \ldots, 2m_i +1\}$, and more specifically, $\cl{C}(D_i) \in \{ 0, 2, \ldots, 2m_i \}$ if $D_i$ coincides with a threshold or
focal plane, and $\cl{C}(D_i) \in \{ 1, 3, \ldots, 2m_i+1 \}$ otherwise. 
More generally, $\cl{C}(S) = \{ \cl{C}(D) \mid D\subseteq S \}$, for any set of domains $S$.
Obviously, $\cl{C}$ can also be used for the discretization of parameter values. 
For example, in the case of the variable $x_\GalF$, we have one threshold and three focal parameters. 
Now, let $D$ be a domain in the state space and $D_\GalF$ its component in the $x_\GalF$-dimension. Given the following total order on the threshold
and focal parameters, $0<\kappa_\GalF^0/\gamma_\GalF<\theta_\GalF<(\kappa_\GalF^0 + \kappa_\GalF)/\gamma_\GalF$, we find $\cl{C}(0)=0$ (by definition),
$\cl{C}(\kappa_\GalF^0/\gamma_\GalF)=2$, $\cl{C}(\theta_\GalF)=4$, and $\cl{C}(\kappa_\GalF^0 + \kappa_\GalF)/\gamma_\GalF)=6$.

The above discretization motivates the introduction of symbolic variables $\hat{D}_i$, $\hat{D}'_i$, $\hat{\theta}_i^j$, $\hat{\lambda}_i^j$ that encode $\cl{C}(D_i)$,
$\cl{C}(D'_i)$, $\cl{C}(\theta_i^j)$, $\cl{C}(\lambda_i^j)$, respectively. The different conditions in Prop.~\ref{prop:diminc} can be expressed in terms of this
encoding. For instance, the sign of $D'_i-D_i$ simply becomes $\hat{D}'_i-\hat{D}_i$. The translation is less evident for the encoding of $F_i(D',D)$, the sign of
which needs to be computed in the transition rules. Multiplication by $1/\gamma_i$ does not change the sign, but gives the more convenient expression

\begin{equation}
F_i(D,D')/\gamma_i=\sum_{l\in L_i} (\kappa_i^l/\gamma_i)\; B_i^l(D) - D'_i \label{eq:diffinclsimpl}
\end{equation}
Recall that the first term in the righthand side is simply an interval whose upper and lower bound are focal parameters, determined by the regulation functions
$B_i^l(D)$. 
By redefining the step functions in terms of the symbolic variables:
\begin{equation}
S^+(D_j,\theta_j)= \left\{\begin{array}[c]{ll}
\mbox{} [0,0] & \mbox{ iff } \hat{D}_j < \hat{\theta}_j\\
\mbox{} [0,1] & \mbox{ iff } \hat{D}_j = \hat{\theta}_j\\
\mbox{} [1,1] & \mbox{ iff } \hat{D}_j > \hat{\theta}_j
\end{array}\right.
\end{equation}
each $B_i^l(D)$ can be simply computed by means of interval arithmetic. 
Evaluating the expression $\sum_{l\in L_i} (\kappa_i^l/\gamma_i)\; B_i^l(D)$ leads to an interval with focal
parameters as bounds, and which can therefore be represented by $\hat{\lambda}_i^j$.
From this interval we subtract $\hat{D}'_i$ to symbolically define $F_i(D,D')/\gamma_i$. 
The sign of the latter expression allows one to check the conditions of Prop.~\ref{prop:diminc}, and thus to
derive the transitions in the state transition graph. 
The specification of transitions in a symbolic way is the main stumble block for the efficient encoding of the PADE dynamics, especially when $D$ is located on a threshold plane. 
In our previous work~\cite{HdJ2725}, the computation of transitions required the enumeration of an exponential number of domains surrounding $D$ \cite{gb137}. 
The interval-based formulation proposed here avoids this inefficient approach and allows $F_i(D,D')/\gamma_i$ to be computed in one stroke.

The implementation in a model checker like NuSMV \cite{gr71} is straightforward with the above encoding. In particular, we apply invariant constraints on the symbolic
variables to exclude all valuations of $\hat{D}_i$, $\hat{D}'_i$, $\hat{\theta}_i^j$, $\hat{\lambda}_i^j$ that do not correspond to a valid transition from $D$ to $D'$
for a given parameterization. We apply three types of invariants. The first one constrains parameters to remain constant. The second one constrains $D$ and $D'$ to
be neighbors in the state space (\textit{e.g.}, $D\subseteq \partial D'$ for dimension-increasing transitions). 
The final invariant constrains the relative position of $D$ and $D'$ and the parameter order as stated in the transitions conditions. 
For comparison with experimental data, we also need to know 
the variations of concentrations of gene products in each state. 
Formally, it is defined as the derivative sign pattern, and simply corresponds to the sign of $F_i(D,D')$ as computed above.

The initial states of our symbolic structure correspond to each possible parametrization and transitions towards all states $D$. A CTL property $\phi$ holds for a
symbolic structure if all initial states satisfy $\phi$. Therefore, by testing whether $\neg \phi$ holds, we verify the absence of a parametrization satisfying $\phi$.
A counterexample to $\neg \phi$ thus returns a valid parametrization. 
The current version 8 of GNA~\cite{HdJMonteiro} has been extended with export functionalities to generate the symbolic encoding of PADE models in the NuSMV language.

\section{Validation: consistency of IRMA network with experimental data}

Are the observations of the IRMA dynamics consistent with the network structure? At first sight this question may seem incongruous as one expects this to be the
case by definition (each genetic construct was tested before integration in the yeast cell). However, in practice it is far from trivial, even if the design and
construction have been carried out with great care, to avoid interactions between the synthetic network and the host.

\subsection{Temporal-logic encoding of observations}

To test the consistency between our PADE model of the IRMA network and the experimental data, we express that for each condition, switch-on and switch-off,
there must exist an initial state of the system and a path starting from this state along which the gene expression changes correspond to the observed time-series
data. 
For example, for the switch-off time-series we encode that there exists an initial state where in absence of galactose the expression of
\textit{SWI5}, \textit{CBF1}, \textit{GAL4} and \textit{ASH1} decreases (in the interval $[0,10]$ min), and from which a state can be reached where the expression of
\textit{SWI5} decreases and the expression of \textit{CBF1} increases (in the interval $[10,20]$ min), \textit{etc}. 
The generation of this property, called $\phi_1$, from the experimental data leads to the temporal-logic formula shown in Fig.~\ref{fig:average_data}(b).
The property is automatically generated from the experimental data using a Matlab script. 

To disregard small fluctuations due to biological and experimental noise, we considered that changes of magnitude less that $5\cdot 10^{-3}$ units are not
significant. This smooths out, for example, Gal4 expression levels in switch-off conditions after 40 min. 
In \cite{HdJCantone} it was demonstrated by glucose-to-glucose shift experiments that the mere resuspension of cells into fresh medium has a network-independent effect:  the expression of \textit{GAL80} and \textit{GAL4} is strongly increased in the first 10 min after resuspension. Therefore, we did not incorporate in our specification the very first measurements (in the interval $[-10,0]$) made just before shifting cells to a new medium.

The data presented in \cite{HdJCantone} for switch-on and switch-off conditions are the average of 5 and 4 individual experiments, respectively. As noticed in
Section~\ref{sec:model,data}, expression profiles obtained in similar conditions may differ significantly. 
In the case of such heterogeneous behavior, properties capturing the average gene expression profile may be misleading. Consequently, asking for consistency between our model and the result of each individual experiment might be more appropriate. This leads us to define a second property $\phi_2$ similar to $\phi_1$ but requiring the existence of 9 paths in the graph, one
for each of the observed behaviors in the 5 switch-on and 4 switch-off experiments. 
Although the information we extract from the experimental data is purely qualitative, only concerning trends in gene expression levels, the accumulation of these simple observations leads to fairly complex constraints. 
Property $\phi_2$ involves nearly 160 constraints on derivative signs.

\subsection{Testing consistency of network with observations}

We use our symbolic encoding of the PADE dynamics and verify the existence of a valid parameter ordering. 
We do this by testing the negation of $\phi_1$ or $\phi_2$, such that a
negative answer from the model checker proves the existence of at least one valid parametrization, as explained in Section~\ref{sec:model_checking,symbolic_encoding}.
Moreover, the counterexample returned provides one such parametrization. 
By means of this approach, we can prove the existence of a parametrization satisfying the averaged time-series data ($\phi_1$). 
The result was obtained in 49\,s on a laptop (PC, 2.2\,Ghz, 1 core, 2\,Gb RAM). 
The state space contains nearly 50000 discrete states and the parameter space is discretized into nearly 5000 different parameter orderings. 
The counterexample of $\neg \phi_1$, obtained in 100 s, provides a valid parametrization (Table~\ref{tab:result_table}).

When analyzing the corresponding parametrization, the thresholds are mostly higher than the focal parameter for basal expression and lower than the focal parameter for upregulated expression, \textit{e.g.}, $\kappa_\Ash^0/\gamma_\Ash < \theta_\Ash < (\kappa_\Ash^0+\kappa_\Ash)/\gamma_\Ash$. This is not surprising as the focal parameters correspond to the lowest and highest possible expression levels. The threshold at which Ash1 controls \textit{CBF1} expression is expected to lie between the two extremes. 
The only exception in the parameters found by the model checker is Gal80, for which it holds $(\kappa_\GalE^0+\kappa_\GalE)/\gamma_\GalE < \theta_\GalE$. According to this constraint, Gal80 plays no role in the system, since it cannot exceed the threshold concentration above which it inhibits Swi5. 
This is interesting because it suggests that the switch-off behavior may occur even without any inhibition by Gal80, and consequently, in a galactose-independent manner.

The dynamic properties of the PADE model can be analyzed in more detail by means of GNA. 
This shows the existence of an asymptotically stable steady state
corresponding to switch-off conditions, with low Swi5, Gal4, Cbf1, Ash1, and Gal80 concentrations. 
In addition, GNA finds strongly connected components (SCCs) consistent with the observed damped oscillations observed in galactose media. 
However, the attractors co-exist irrespectively of the presence or absence of galactose, revealing that galactose does not necessarily drive the system to a single attractor for this particular parametrization.

We also tested whether the above parametrization is consistent with time-series data from the individual experiments.  
In 3\,s the model checker shows that it does not satisfy the more constraining property $\phi_2$. 
However, we do find another parametrization for which $\phi_2$ holds. In this case, all thresholds are situated between
the basal and upregulated focal parameters (237\,s, including counterexample generation).

\begin{sidewaystable}
\begin{scriptsize}
\begin{tabular}{|c|c|c|c|c|}
\hline
 & \multicolumn{2}{c|}{Symbolic state space and symbolic parameter space} & \multicolumn{2}{c|}{Symbolic state space and explicit parameter space}\\
\hline
Property & Existence of & Parametrization$^*$ & Number of & Parametrization$^*$ \\
& parametrization & & parametrizations &\\
\hline
\raisebox{4mm}{} $\phi_1$: averaged & Yes & $\frac{\kappa_\Swi^0}{\gamma_\Swi} < \theta^g_\Swi < \theta^c_\Swi <\theta^a_\Swi < \frac{\kappa_\Swi^0+\kappa_\Swi}{\gamma_\Swi}$ & 12 & $\frac{\kappa_\Swi^0}{\gamma_\Swi} < \theta^c_\Swi< \theta^a_\Swi < \frac{\kappa_\Swi^0+\kappa_\Swi}{\gamma_\Swi} ~\wedge$\\ 
 time-series & (49\,s) & $\wedge ~\frac{\kappa_\GalE^0}{\gamma_\GalE}< \frac{\kappa_\GalE^0+\kappa_\GalE}{\gamma_\GalE} < \theta_\GalE$ & (925\,s) & $(
\begin{array}[t]{r@{~}l}
&\theta_\GalE < \frac{\kappa_\GalE^0}{\gamma_\GalE} \wedge \frac{\kappa_\Swi^0}{\gamma_\Swi} < \theta^g_\Swi <\frac{\kappa_\Swi^0+\kappa_\Swi}{\gamma_\Swi}\\
\vee& \frac{\kappa_\GalE^0}{\gamma_\GalE}< \theta_\GalE < \frac{\kappa_\GalE^0+\kappa_\GalE}{\gamma_\GalE}  \wedge \frac{\kappa_\Swi^0}{\gamma_\Swi} < \theta^g_\Swi\\
\vee& \frac{\kappa_\GalE^0+\kappa_\GalE}{\gamma_\GalE} < \theta_\GalE )\raisebox{-2mm}{}
\end{array}$ \\
\hline
\raisebox{4mm}{}$\phi_2$: individual & Yes & $\frac{\kappa_\Swi^0}{\gamma_\Swi} <\theta^c_\Swi < \theta^a_\Swi< \theta^g_\Swi < \frac{\kappa_\Swi^0+\kappa_\Swi}{\gamma_\Swi}$ & 4 & $\frac{\kappa_\Swi^0}{\gamma_\Swi} <\theta^c_\Swi < (\theta^a_\Swi, \theta^g_\Swi) < \frac{\kappa_\Swi^0+\kappa_\Swi}{\gamma_\Swi}$\\
\raisebox{-2mm}{}time-series & (131\,s) & $\wedge ~\frac{\kappa_\GalE^0}{\gamma_\GalE}< \theta_\GalE< \frac{\kappa_\GalE^0+\kappa_\GalE}{\gamma_\GalE}$ & (2021\,s) & $\wedge~(\frac{\kappa_\GalE^0}{\gamma_\GalE}, \theta_\GalE) < \frac{\kappa_\GalE^0+\kappa_\GalE}{\gamma_\GalE}$\\
\hline
\raisebox{4mm}{}$\phi_3$: single &  Yes & $\theta^c_\Swi  < \frac{\kappa_\Swi^0}{\gamma_\Swi} < \theta^g_\Swi< \theta^a_\Swi <\frac{\kappa_\Swi^0+\kappa_\Swi}{\gamma_\Swi}$ & 7 & $\theta^c_\Swi  < \frac{\kappa_\Swi^0}{\gamma_\Swi} < \theta^a_\Swi <\frac{\kappa_\Swi^0+\kappa_\Swi}{\gamma_\Swi}$\\
attractor & (126\,s) &$\wedge~ \theta_\GalE <\frac{\kappa_\GalE^0}{\gamma_\GalE}<\frac{\kappa_\GalE^0+\kappa_\GalE}{\gamma_\GalE}$ & (1300\,s)& $\wedge~\theta_\GalE <\frac{\kappa_\GalE^0+\kappa_\GalE}{\gamma_\GalE}$\\
\raisebox{-2mm}{}&&&& $\wedge~(\theta^g_\Swi <\frac{\kappa_\Swi^0}{\gamma_\Swi}\vee\theta_\GalE <\frac{\kappa_\GalE^0}{\gamma_\GalE}~)$\\
\hline
\end{tabular}\\
$^*$All parametrizations additionally include 
$\kappa^1_\Cbf/\gamma_\Cbf < \theta_\Cbf <(\kappa^1_\Cbf+\kappa^2_\Cbf)/\gamma_\Cbf 
~\wedge~ \kappa_\GalF^0/\gamma_\GalF < \theta_\GalF <(\kappa_\GalF^0+\kappa_\GalF)/\gamma_\GalF 
~\wedge~ \kappa_\Ash^0/\gamma_\Ash < \theta_\Ash < (\kappa_\Ash^0+\kappa_\Ash)/\gamma_\Ash$.
\end{scriptsize}
\caption{
Summary of parametrizations found by checking the consistency of the IRMA structure with the observed and desired behaviors, expressed as temporal-logic
properties $\phi_1$, $\phi_2$, and $\phi_3$. 
The table shows the parametrization returned when testing the truth-value of the property on the symbolically encoded PADE model and
dynamics (left) and summarizes all parametrizations satisfying the properties (right).
} \label{tab:result_table}

\end{sidewaystable}

\subsection{Detailed analysis of valid parameter set}

As stated above, our consistency tests only confirm that a parametrization exists for which the structure of the network is consistent with the observed behavior.
However, it does not say if this is trivially the case (when most parametrizations are) or if the properties are selective (when most parametrization are not). To
investigate this we exhaustively generated all possible parametrizations, and tested for each of them property $\phi_1$ (averaged time-series) and $\phi_2$ (individual time-series). Although the total number of parameter orderings (4860) is fairly large, the exhaustive analysis is still manageable for networks of this size.

Out of the 4860 completely parametrized PADE models, we found that only a surprisingly small subset is consistent with the observations. 
For the averaged time-series, only 12 parametrizations are consistent, while for the individual time-series this subset is further reduced to 4
parametrizations (Table~\ref{tab:result_table}). The properties extracted from the data are thus seen to be quite selective.

These results indicate that to be consistent with the experimental
data, the activation threshold of \textit{CBF1} by Swi5, $\theta^c_\Swi$, must
be smaller than the activation thresholds of \textit{ASH1} and \textit{GAl80} by Swi5,
$\theta^a_\Swi$ and $\theta^g_\Swi$.  
Interestingly, this result is corroborated by independent experimental studies. 
Fitting of experimental data on promoter activities to Hill functions showed that the activation threshold for the \textit{ASH1} promoter, controlling \textit{ASH1} and \textit{GAL80} expression, is nearly twice as high as the one for the \textit{HO} promoter controlling \textit{CBF1} expression (Table S1 of \cite{HdJCantone}).

A second finding is that the dynamics of the system is consistent with the experimental data even if $\theta_\GalE < \kappa_\GalE^0/\gamma_\GalE$, that is when \textit{GAL80} is constitutively expressed above its inhibition threshold.
This indicates that an effective regulation of \textit{GAL80} expression by Swi5 is of little importance for the functioning of the network.
And indeed, it was found that \textit{GAL80} is not much responsive to changes in Swi5 availability: Cantone \textit{et al} observed that a 6-fold
increase of \textit{SWI5} expression leads to only a negligible (1.08-fold) increase in \textit{GAL80} expression levels (Fig. 4A in \cite{HdJCantone}).

\section{Re-engineering: improving external control by galactose}

As stated above, it has been experimentally observed that the system response to an addition of galactose is not always identical.
In one experiment at least, the addition of galactose does not significantly changes the system's behavior: a switch-off like response is observed in switch on conditions (Fig.~\ref{fig:average_data}(c)).
To obtain a more robust external control of the system, we would like to ensure that the addition of galactose drives the system out of the low-Swi5 state.

\subsection{Temporal-logic specification of design objective}

We start by specifying that two attractors can be reached, one in switch-off conditions, and one in switch-on conditions.
In switch-off conditions, the Swi5 concentration must eventually remain low, that is, equal to its basal expression level $\kappa_\Swi^0/\gamma_\Swi$. This is expressed in CTL as $\AF{} \, \AG{} \, x_\Swi\, \textit{low}$.
In switch-on conditions, an oscillatory behavior in the concentration of Swi5 is expected. It can be formulated by means of the formula $\AG{} \, \AF{}
\, (x_\Swi \, \textit{inc} \, \wedge \, \AF{} x_\Swi \, \textit{dec})$, requiring that an increase in $x_{Swi5}$ is observed infinitely often and is necessarily followed by a decrease in $x_\Swi$. 
In addition to these two basic requirements, we impose that in presence of galactose, the Swi5 concentration cannot indefinitely stay low: 
$u_\gal\, \high \rightarrow \AF{} \, \neg x_\Swi\, \textit{low}$.
We prefix these specifications so as to express the possibility ($\EX{}$) to reach the appropriate attractor from some initial state, and the necessity ($\AX{}$) to leave the switch-off steady state for all initial states in switch-on conditions. This gives rise to the following property:

$\begin{array}{rl}
\phi_3= &\EX \, (u_\gal\, \high \, \wedge \AG \, \AF \, (x_\Swi \, \textit{inc} \, \wedge \, \AF x_\Swi \, \textit{dec})) \\
&\wedge \, \EX \, (u_\gal\, \low \wedge \AF \, \AG \, x_\Swi\, \textit{low})\\
&\wedge \, \AX{} \, (u_\gal\, \high \rightarrow   \AF{} \, \neg x_\Swi\, \textit{low})
\end{array}$

\subsection{Parametrizations consistent with design objective}

Using symbolic model checking, we test the feasibility of $\phi_3$. In about 2\,min, we find that parametrizations exist for which the system presents the desired
behavior (Table~\ref{tab:result_table}). 
Using GNA, we can analyze the proposed parameter ordering. 
In the presence of galactose, several SCCs are found, with two terminal SCCs attracting the major part of the state space that includes notably the switch-off state: from a switch-off initial state, oscillations necessarily happen. 
In the absence of galactose, a unique stable steady-state where all genes are off is attracting the entire state space. 
Indeed, although SCCs are present, they are non-terminal and one can show that the switch-off steady state is eventually always reached.

As explained above one of the time-series in the switch-on conditions contradicts our specification.
It is consequently not surprising that none of the parametrizations consistent with the experimental data satisfies our design requirements, suggesting that changes are needed.
We therefore tried to find other parametrizations, consistent with $\phi_3$. 
Our method indeed finds an order on the threshold and focal parameters satisfying the property (proven in 126 s), while the enumeration of all 4860 parametrizations shows that only 7 are valid (1300\,s; Table~\ref{tab:result_table}).

A first surprising feature is that $\theta_\Swi^c<\kappa_\Swi^0/\gamma_\Swi$: Swi5 must always activate \textit{CBF1}. 
Stated differently, this constraint simply suggests to remove the regulation of \textit{CBF1} by Swi5.
This can be explained by a qualitative analysis of the system dynamics. 
In the presence of galactose, we expect oscillations for Swi5. However, the presence of Swi5 is required for
the expression of \textit{CBF1} since the \textit{HO} promoter functions like an AND gate: \textit{HO} is on if and only if Swi5 is present \emph{and} Ash1 is absent.
So, if Swi5 is not permanently present, Cbf1 and then Gal4 might diseappear, causing the system to converge to the switch-off state. 

A second surprising feature is that the regulation of \textit{GAL80} by Swi5 should not be effective.
Indeed $\theta^g_\Swi  < \kappa_\Swi^0/\gamma_\Swi$ or $\theta_\GalE < \kappa_\GalE^0/\gamma_\GalE$ means that either the \textit{GAL80} promoter is always activated, or that the Gal80 concentration is always sufficient to repress \textit{SWI5}.
As above, this suggests to remove an interaction, namely the regulation of \textit{GAL80} by Swi5. 
Interestingly, the demand for increased external control of the system leads us to a simplified design in which two out of the three feedback loops are removed.

\section{Discussion}

We proposed a method for efficient search of the parameter space of qualitative models of genetic regulatory networks. 
This allows us to test whether a hypothesized structure of the network is consistent with the observed behavior, or whether a proposed structure can generate a desired behavior.

On the methodological side, the main novelty is that we develop a symbolic encoding of the dynamics of PADE models, enabling the use of highly efficient model-checking tools for analyzing incompletely parametrized models. 
The symbolic encoding avoids the explicit generation of the state space, and the enumeration of possible parametrizations.
Although developed for PADE models, the main ideas underlying the approach carry over to logical models~\cite{HdJThomas}.

On the biological side, we show the practical relevance of our approach by means of an application to the IRMA network. The parameter constraints we obtained are precise, have a clear biological interpretation, and are consistent with independent experimental observations. Even when considering complex dynamical properties, the search of the parameter space takes at most a few minutes. 
Our results seem to confirm the intended separation of IRMA from the host network, and suggest that to obtain a more robust response to the addition of galactose, an effective rewiring of the network would be needed.

In comparison with traditional quantitative modeling approaches, the results we obtain are quite general, since they do not depend on a specific representation of the molecular details of the interactions and on specific parameter values. 
Moreover, the analysis is exhaustive in the sense that the entire parameter space is scanned. 
These two features are particularly interesting for negative results, such as showing that a given design is not likely to present a desired  behavior.
In contrast, quantitative ODE models like those developed in \cite{HdJCantone} do not predict a range of possible behaviors but rather single out one likely behavior with quantitative traits.
Qualitative and quantitative approaches provide complementary information on system dynamics.

In comparison with other analysis and verification methods developed for similar modeling formalisms \cite{gb137,HdJ2379,gb136,gbFromentin}, our approach is original in two respects. 
First, it applies to incompletely parametrized models and can handle any dynamical property of the network expressible in the temporal logic supported by the model checker. Second, we reason at a finer abstraction level, in that we take into account dynamics on the thresholds and work with a partition of the state space preserving derivative sign patterns. The latter feature is particularly well-suited for the comparison of model predictions with time-series data in IRMA.

An interesting direction for further research would be to consider even more general problems in which not only parameters but also regulation functions are incompletely specified. This would make a connection with work on the reverse engineering of Boolean models (\textit{e.g.}, \cite{gbREVEAL,gb131,gr436}.

\section*{Acknowledgment}
We would like to thank Delphine Ropers, Maria Pia Cosma, and Diego di Bernardo for helpful discussions and contributions to this work. 
We acknowledge financial support by the European Commission COBIOS FP6-2005-NEST-PATH-COM/043379. 

\appendix
\section*{Appendix}

\section{Transition rules}
\label{sec:tr}

\begin{prop}[Internal transition]\rm Let $D,D'\in \cl{D}$ and $D=D'$. $D \rightarrow D'$ is a internal transition iff
\begin{itemize}
\item[] $\forall i\in [1,n]$, such that $D_i$ coincides with a value in $\Theta_i \cup \Lambda_i$, it holds that $0\in F_i(D,D)$
\end{itemize}
\label{prop:int}
\end{prop}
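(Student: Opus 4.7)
The plan is to prove both directions by directly linking a solution of the differential inclusion (\ref{eq:diffincl}) that stays in $D$ to the algebraic condition $0\in F_i(D,D)$ on the constrained components. The key observation is a dichotomy: for a component $i$ such that $D_i$ coincides with a single value $v\in \Theta_i\cup \Lambda_i$, any trajectory lying in $D$ is forced to satisfy $\xi_i(t)=v$ identically, whereas for a component where $D_i$ is an open interval strictly between two consecutive values of $\Theta_i\cup \Lambda_i$, any continuous trajectory stays in $D_i$ on some initial time interval by openness, independently of the direction of the flow.

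For the forward direction, I would take any solution $\xi$ of the differential inclusion with $\xi(t)\in D$ for all $t\in [0,\tau]$ (existing by the definition of the transition $D\rightarrow D$ with $\tau\neq 0$). For each $i$ such that $D_i=\{v\}$, the constraint $\xi_i(t)=v$ forces $\dot \xi_i(t)=0$ almost everywhere on $[0,\tau]$. Since $\xi(t)\in D$ and the regulation functions are constant on $D$, the $i$-th inclusion reduces to $\dot \xi_i(t)\in \sum_{l\in L_i}\kappa_i^l\,B_i^l(D) - \gamma_i\, v = F_i(D,D)$, hence $0\in F_i(D,D)$.

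For the reverse direction I would explicitly construct a solution staying in $D$ from an arbitrary initial point $x^0\in D$. For each $i$ such that $D_i=\{v\}$, the hypothesis $0\in F_i(D,D)$ provides selections $b_i^l\in B_i^l(D)$ with $\sum_{l\in L_i}\kappa_i^l b_i^l = \gamma_i v$, so choosing these yields $\dot x_i=0$ and locks $\xi_i$ at $v$. For each $i$ such that $D_i$ is open, I pick any selection from $F_i(\cdot)$ (for instance the midpoint of the interval), producing a scalar linear ODE whose unique solution starting at $x^0_i$ is continuous and therefore stays in $D_i$ on some small time interval. Combining these component-wise selections yields an absolutely continuous curve $\xi$ that satisfies the differential inclusion on some $[0,\tau]$ with $\tau>0$ and $\xi([0,\tau])\subseteq D$, which is precisely the definition of the internal transition $D\rightarrow D$.

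The main obstacle I expect is to make sure that the component-wise choice of selection is globally admissible as a selection of (\ref{eq:diffincl}). In the interval-arithmetic formulation adopted in the paper, each $B_i^l(D)$ is a single interval that is constant on $D$, so independent choices across $(i,l)$ lie in the corresponding product set and the resulting $\xi$ is a bona fide solution. A minor subtlety is the strict positivity of $\tau$, which is automatic for fixed components (the solution is literally constant) and for open components follows from $D_i$ being a neighborhood of $x^0_i$ in its coordinate axis, so that $\xi_i$ does not immediately cross a threshold or focal plane.
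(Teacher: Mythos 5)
Your proof is correct and follows essentially the same route as the paper's: the necessity direction is the contrapositive of the paper's argument (a trajectory confined to a singleton component $D_i=\{v\}$ must have zero derivative there, forcing $0\in F_i(D,D)$), and the sufficiency direction uses the same component-wise construction (freeze the singleton components using the hypothesis, take any selection on the open components, and use openness to get a positive dwell time). Your remark on the global admissibility of the component-wise selections is a welcome clarification of a point the paper's sketch leaves implicit.
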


\begin{proof}[Sketch of the proof]\mbox{}\\
(Necessity): If for some $i\in[1,n]$, $D_i$ coincides with a value in $\Theta_i \cup \Lambda_i$ and $0\notin F_i(D,D)$, then 
because $\forall x\in D$, $F_i(x)=F_i(D,D)$, any solution $\xi$ of (\ref{eq:diffincl}) starting in $D$ satisfies $\dot{\xi}_i(0)\neq 0$, and consequently instantaneously leaves $D$.\\
(Sufficiency): We only need to show that for some $x\in D$, there exists a solution $\xi$ of (\ref{eq:diffincl}) that remains in $D$ for some (possibly small) time interval $[0,\tau]$. Let $x_0$ be any point in $D$. For all dimensions $i$ where $D_i$ coincides with a value in $\Theta_i \cup \Lambda_i$, choose $\xi_i(t)=x_{0_i}$ for $t\geq 0$. For all other dimensions $i$, choose any solution of the differential inclusion $\dot{x}_i \in F_i(x)$. Then for a sufficiently small   $\tau>0$, $\xi(t)=(\xi_1(t),\ldots\xi_n(t))$ remains in $D$ and is a solution of (\ref{eq:diffincl}) on $[0,\tau]$.
\end{proof}

\begin{prop}[Dimension-increasing transition]\rm Let $D,D'\in \cl{D}$ and $D \subseteq \partial D'$. $D \rightarrow D'$ is a dimension-increasing transition iff
\begin{enumerate}
\item $\forall i\in [1,n]$, such that $D_i$ and $D'_i$ coincide with a value in $\Theta_i \cup \Lambda_i$, it holds that $0\in F_i(D',D)$, and
\item $\forall i\in [1,n]$, such that $D_i\neq D'_i$, it holds that $\exists \alpha>0$ such that $\alpha \in F_i(D',D) \, (D'_i - D_i)$
\end{enumerate}
\label{prop:diminc_app}
\end{prop}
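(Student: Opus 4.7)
The plan is to mirror the strategy used for the internal-transition rule Prop.~\ref{prop:int}: necessity is obtained by reading off the required sign conditions from any witness trajectory, and sufficiency is obtained by constructing such a trajectory componentwise out of scalar selections of the affine inclusions $\dot{x}_i \in F_i(x)$. The observation enabling both directions is that, because $D \subseteq \partial D'$, every $x$ in the interior of $D'$ satisfies $B_i^l(x) = B_i^l(D')$, while every coordinate common to $D$ and $D'$ has value $D_i = D'_i$; hence the flow ``in $D'$ infinitely close to $D$'' reduces to $F_i(D', D) = \sum_l \kappa_i^l B_i^l(D') - \gamma_i D_i$.

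For necessity, I would fix a witness $\xi$ with $\xi(0) \in D$, $\xi(\tau) \in D'$, and $\xi([0,\tau]) \subseteq D \cup D'$. If $i$ satisfies the hypothesis of condition 1, then no point of $D \cup D'$ has its $i$-th coordinate different from $D_i$, so $\xi_i(t) \equiv D_i$ and $\dot{\xi}_i = 0$ almost everywhere; evaluating the inclusion at any time when $\xi(t)$ lies in the interior of $D'$ yields $0 \in F_i(D', D)$. If instead $D_i \neq D'_i$, then $D_i$ is a value of $\Theta_i \cup \Lambda_i$ and $D'_i$ is the adjacent open slab; since $\xi_i$ must travel from $D_i$ into $D'_i$, there is a time arbitrarily close to the entry of $\xi$ into $D'$ at which $\dot{\xi}_i$ has the sign of $D'_i - D_i$, and that derivative lies in $F_i(D', D)$, furnishing the strictly positive $\alpha = \dot{\xi}_i \cdot (D'_i - D_i)$ required by condition 2.

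For sufficiency, I would pick any $x^0 \in D$ and build $\xi$ componentwise on a small interval $[0,\tau]$. For each $i$ with $D_i = D'_i \in \Theta_i \cup \Lambda_i$, set $\xi_i(t) \equiv D_i$; by condition 1 the constant $0$ is an admissible selection of $F_i$. For each $i$ with $D_i \neq D'_i$, use condition 2 to pick $v_i \in F_i(D', D)$ of the sign of $D'_i - D_i$, and let $\xi_i$ solve $\dot{x}_i = v_i - \gamma_i(x_i - D_i)$ starting from $\xi_i(0) = D_i$. A direct check shows that as long as $\xi(t) \in D'$ this affine RHS coincides with $\sum_l \kappa_i^l B_i^l(D') - \gamma_i x_i$, so it is a genuine selection of $F_i$, and $\dot{\xi}_i(0) = v_i$ forces $\xi_i(t)$ into $D'_i$ for all sufficiently small $t > 0$. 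For any remaining coordinates, where $D_i = D'_i$ is a common open slab, pick any local selection of $F_i$. Continuity then yields $\tau > 0$ such that the combined $\xi$ stays in $D \cup D'$ on $[0,\tau]$ and reaches a point of $D'$.

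The main obstacle is the gluing step in the sufficiency direction: verifying that the componentwise selections combine into a single solution of the joint inclusion (\ref{eq:diffincl}) that actually remains in $D \cup D'$ for positive time, rather than instantly exiting along some coordinate. Once the regulation functions have collapsed to $B_i^l(D')$ the inclusion decouples coordinate by coordinate, so this reduces to showing a uniform positive time during which no coordinate pinned to a threshold or focal plane escapes and every coordinate with $D_i \neq D'_i$ has strictly moved into the adjacent slab; the strict inequality $\alpha > 0$ supplied by condition 2, together with the affine smoothness of each scalar selection, is precisely what delivers this uniform $\tau$.
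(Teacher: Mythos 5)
Your proposal is correct and follows essentially the same route as the paper's proof: necessity by extracting the sign conditions from a witness trajectory (you argue condition 2 directly where the paper argues by contradiction, but the underlying computation with the bounds $\underline{\lambda}_i,\overline{\lambda}_i$ of $\sum_{l}(\kappa_i^l/\gamma_i)B_i^l(D')$ is the same), and sufficiency by the same componentwise construction — constant selections on pinned coordinates, an affine selection driving the moving coordinates into $D'$ (you allow any admissible $v_i$ of the right sign where the paper takes the extremal selection $\dot{x}_i=\gamma_i(\overline{\lambda}_i-x_i)$), and arbitrary selections elsewhere. The only imprecision is the claim in the necessity of condition 2 that $\dot{\xi}_i(t)$ itself lies in $F_i(D',D)$; strictly it lies in $F_i(\xi(t))$, which is the same interval shifted by $-\gamma_i(\xi_i(t)-D_i)$, but the sign conclusion transfers, so this does not affect correctness.
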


\begin{proof}[Sketch of the proof]\mbox{}\\
(Necessity): Condition 1 expresses that if $D_i$ and $D'_i$ are singletons, then any solution $\xi$ remaining in $D'$ should satisfy $\dot{\xi}_i(t)=0$. The proof is made as in Proposition~\ref{prop:int}. 
For condition 2, assume that for some $i$ such that $D_i\neq D'_i$,  all $\alpha \in F_i(D',D) \, (D'_i - D_i)$ are non-positive. 
Moreover, assume without loss of generality that $D'_i - D_i>0$.
Then denoting $\underline{\lambda}_i$ and $\overline{\lambda}_i$ the bounds of the interval $\sum_{l\in L_i} (\kappa_i^l/\gamma_i)\; B_i^l(D')$, we have for any $x\in D$, $F_i(D',D)/\gamma_i=  [\underline{\lambda}_i, \overline{\lambda}_i]-x_i\leq 0$.
We deduce that  for all $x'\in D'$, $F_i(x')/\gamma_i=[\underline{\lambda}_i, \overline{\lambda}_i]-x'_i<0$. Consequently, given the relative positions of $D$ and $D'$, no solution can enter $D'$ from $D$.

(Sufficiency): We show that there exists a solution $\xi$ of (\ref{eq:diffincl}) that starts in some $x_0\in D$ ($\xi(0)=x_0$) and enters and remains in $D'$ for some (possibly small) time interval ($\xi(t)\in D', t\in ]0,\tau]$). Let $x_0$ be any point in $D$. 
For all dimensions $i\in [1,n]$ where $D_i$ and $D'_i$ are singletons, choose $\xi_i(t)=x_{0_i}$ for $t\geq 0$. 
For all dimensions $i\in [1,n]$ where $D'_i - D_i>0$ (the case $D'_i - D_i<0$ being symmetrical), $D_i$ is a singleton, and for any $x\in D$, $\max F_i(D',D)/\gamma_i= \overline{\lambda}_i-x_i>0$ implies that $\max F_i(x')= \gamma_i (\overline{\lambda}_i- x'_i)>0$ for all $x'$ in a (possibly small) neighborhood of $x$ in $D'$. 
Then choose for $\xi_i$ the solution of the differential equation $\dot{x}_i = \gamma_i (\overline{\lambda}_i- x_i)$ with $\xi_i(0)=x_{0_i}$. 
For all other dimensions, choose any solution of the differential inclusion $\dot{x}_i \in F_i(x)$, with $\xi_i(0)=x_{0_i}$.
Then for a sufficiently small $\tau>0$, $\xi(t)=(\xi_1(t),\ldots\xi_n(t))$ starts in $D$, remains in $D'$ on $]0,\tau]$ and is a solution of (\ref{eq:diffincl}) on $[0,\tau]$.
\end{proof}

\begin{prop}[Dimension-decreasing transition]\rm Let $D,D'\in \cl{D}$ and $D'\subseteq \partial D$. $D \rightarrow D'$ is a dimension-decreasing transition iff
\begin{itemize}
\item[A)] 
  \begin{enumerate}
    \item $\forall i \in [1,n]$, such that $D_i$ and $D'_i$ coincide with a value in $\Theta_i \cup \Lambda_i$, it holds that $0\in F_i(D,D')$, and
    \item $\forall i\in [1,n]$, such that $D_i\neq D'_i$, it holds that $\exists \alpha>0$ such that $\alpha \in F_i(D,D') \, (D'_i - D_i)$
  \end{enumerate}
\item[B)] or $\forall i\in [1,n]$, $0\in F_i(D,D')$
\end{itemize}
\end{prop}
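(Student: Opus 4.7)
The plan is to mirror the proof of Prop.~\ref{prop:diminc_app} but to split on the two disjunctive cases in the statement, since a dimension-decreasing transition from $D$ to $D' \subseteq \partial D$ can be witnessed either by a solution reaching $D'$ at some finite time $\tau > 0$ (case A) or only asymptotically as $\tau \to \infty$ (case B).

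For necessity I would assume $D \to D'$ and distinguish according to whether the witnessing solution reaches $D'$ in finite time. The finite-time subcase reproduces almost verbatim the argument for Prop.~\ref{prop:diminc_app}: condition~A1 is forced by the same reasoning as in Prop.~\ref{prop:int}, since a non-zero component of $F_i$ on a coordinate where $D_i = D'_i$ sits on a threshold or focal plane would instantaneously push the trajectory off that face; and condition~A2 is forced because $F_i(D,D')$ has to carry the coordinate from $D_i$ toward $D'_i$, so the product $F_i(D,D')(D'_i - D_i)$ must contain a positive value. The infinite-time subcase is the genuinely new one: the trajectory then approaches $\overline{D'}$ asymptotically while staying in $D$, and since within $D$ the flow has the form $F_i(x) = \gamma_i([\underline{\lambda}_i,\overline{\lambda}_i] - x_i)$ with monotonic convergence to its focal rectangle, the asymptotic limit's lying in $\overline{D'}$ forces the focal rectangle to meet $D'_i$ in every coordinate, i.e., $0 \in F_i(D,D')$ for all $i$.

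For sufficiency I would construct an explicit witness $\xi$ in each case, following the template of Prop.~\ref{prop:diminc_app}. Under~(A): freeze the coordinates where $D_i = D'_i$ lies on a threshold or focal plane (using $0 \in F_i(D,D')$ to keep them constant, as in Prop.~\ref{prop:int}), and for each dimension with $D_i \neq D'_i$ let $\xi_i$ solve the scalar ODE $\dot{x}_i = \gamma_i(\overline{\lambda}_i - x_i)$ or its symmetric version, whose correct sign under~A2 produces exponential approach to $D'_i$ in finite time while the trajectory stays in $D \cup D'$. Under~(B): the hypothesis $0 \in F_i(D,D')$ for all $i$ places the focal point of the dynamics on $D$ inside $\overline{D'}$, so picking a selector of the differential inclusion yields a solution that converges monotonically to that focal point with $\tau = \infty$ and never leaves $D$.

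The main obstacle is case~B: one must argue that the asymptotic limit genuinely lies in $D'$ (and not on some other face of $\partial D$), and that the trajectory never exits $D$ along the way. Both are handled by invoking the classical monotone convergence of PADE dynamics to the focal set within a single regular domain~\cite{HdJ1030,HdJ2725}, combined with the geometric constraint $D' \subseteq \partial D$, which pins down exactly which coordinates are forced to equal threshold or focal values in $D'$ and thereby identifies the face on which the limit point sits.
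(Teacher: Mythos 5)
Your proposal is correct and follows essentially the same route as the paper's proof: the disjunction A/B is matched to finite-time versus asymptotic ($\tau=\infty$) witnessing solutions, sufficiency is handled by the same explicit constructions (freezing singleton coordinates and solving $\dot{x}_i=\gamma_i(\overline{\lambda}_i-x_i)$ under A, and converging to a focal point $x^*\in D'$ under B), and necessity rests on the same observations that a flow bounded away from zero expels or prevents convergence. The only difference is presentational: you argue necessity directly by a dichotomy on the arrival time of the witness, whereas the paper argues contrapositively from the failure of A1 or of A2 and B.
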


\begin{proof}[Sketch of the proof]\mbox{}\\
(Necessity): 
Assume that A1 does not hold. Then B does not hold either.
This means that for some $i \in [1,n]$ such that $D_i$ and $D'_i$ coincide with a value in $\Theta_i \cup \Lambda_i$, it holds that $0\notin F_i(D,D')$. 
Then it can be shown as in the proof of Prop.~\ref{prop:int} that no solution can remain in $D$.
Now, assume that neither A2 nor B holds. As in the proof of Prop.~\ref{prop:diminc_app} and using the same notations, we can show that not A2 implies that if $D'_i-D_i>0$, then $\overline{\lambda}_i- x'_i\leq 0$. If $\overline{\lambda}_i- x'_i< 0$, there cannot be transitions from $D$ to $D'$. 
If $\overline{\lambda}_i- x'_i=0$, one can show that the solutions of the differential equation $\dot{x}_i=\max F_i(x)$ reach $\overline{\lambda}_i= x'_i$ after an infinite amount of time. Then, obviously no solution of (\ref{eq:diffincl}) can reach $D_i'=\{x'_i\}$ in lesser time. 
But then, the asymptotic convergence towards some point $x'\in D'$ implies that for all $i\in [1,n]$, $0\in \lim_{x\to x'} F_i(x)$, and hence $0\in F_i(D,D')$. Indeed, if for some $i\in[1,n]$ and $\epsilon>0$, $F_i(D,D')>\epsilon$ (or equivalently if $F_i(D,D')<\epsilon$), any solution would leave in finite time any neighborhood in $D$ of any point $x'\in D'$.  

(Sufficiency): 
Assume that conditions A1 and A2 hold. Then one can construct as in the proof of Prop~\ref{prop:diminc_app} a solution $\xi$ that starts and remains in $D$ some time interval $[0,\tau[$ ($\xi(t)\in D, t\in [0,\tau[$) and enters in $D'$ at time $\tau$ ($\xi(\tau)\in D'$). 
Alternatively, assume that condition B holds.
Then, $\forall i\in [1,n]$, $0\in F_i(D,D')$ implies that for some $x^*\in D'$, $0\in \sum_{l\in L_i} (\kappa_i^l/\gamma_i)\; B_i^l(D)-x^*_i$, $i\in[1,n]$.
Let $x_0$ be any point in $D$ and $\xi_i(t)$ be the solution of $\dot{x}_i= \gamma_i (x^*_i - x_i)$ on $[0,\infty[$ with $\xi_i(0)={x_0}_i$, $i\in [1,n]$.
One can check that $\xi=(\xi_1,\ldots,\xi_n)$ is a solution of (\ref{eq:diffincl}) such that $\forall t\geq 0$, $\xi(t)\in D$, and $lim_{t\rightarrow \infty} \xi(t)= x^*\in D'$. 
\end{proof}

\section{Comparison with previous definition of dynamics}
\label{sec:comparison}
 
In \cite{HdJ2725}, we introduced a different definition of the dynamics in regions of step function discontinuities -that is, threshold hyperplanes-
also based on differential inclusions.
The goal of this section is to show that in most cases the differential inclusions, and hence the set of solutions, are the same for both definitions.

The definition proposed in \cite{HdJ2725} makes use of the notions of regular and singular domains. 
Intuitively speaking, singular domains are located on threshold or focal planes, contrary to regular domains.
Moreover, one defines for any singular domain $D$, $R(D)$ as the set of regular domains surrounding $D$.
We refer to \cite{HdJ2725} for the precise definition of these notions. 

For simplicity of notations, we set $R(D)=D$ for a regular domain $D$. 
Also, for having more compact notations, we introduce $g_i(D,x)$ as the value of $f$ in $D$ infinitely close to $x$: $g_i(D,x)=\lim_{y\to x, y\in D} f_i(y)$. Naturally, $g_i(D,x)$ is well defined only if $x$ is in $\overline{D}$, the closure of $D$, and $g_i(D,x)=f_i(x)$ if $x\in D$.
Then in \cite{HdJ2725}, the dynamics is defined as 
\begin{equation} 
\label{eq:di_Filippov}
\dot{x}_i\in G_i(x)\triangleq [\min_{D'\in R(D)} g_i(D',x), \max_{D'\in R(D)} g_i(D',x)]
\end{equation} 
The definition we propose in this paper is more conservative than the one given in \cite{HdJ2725}.
\begin{prop}
$$G_i(x)\subseteq F_i(x), \qquad  x\in \Omega$$
\label{prop:conservativeness}
\end{prop}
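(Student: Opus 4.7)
The plan is to show that every element of $G_i(x)$ arises as a sum $\sum_l \kappa_i^l b_i^l(D') - \gamma_i x_i$ for some regular $D' \in R(D)$, and to match each such term against the interval extension defining $F_i(x)$. The main ingredient is a compatibility statement: for every regular $D' \in R(D)$ surrounding the domain $D$ containing $x$, the Boolean value $b_i^l(D') \in \{0,1\}$ lies inside the interval $B_i^l(x)$. Once this is established, everything else follows from the fact that $F_i(x)$, being a Minkowski sum of intervals, is itself an interval (hence convex).

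First I would fix $x \in \Omega$, let $D$ be the unique domain of $\mathcal{D}$ containing $x$, and unfold the definition of $g_i(D',x)$ for $D' \in R(D)$. Since $b_i^l$ is constant on each regular domain and $y \mapsto \gamma_i y_i$ is continuous, we have $g_i(D',x) = \sum_{l \in L_i} \kappa_i^l\, b_i^l(D') - \gamma_i x_i$. Next I would verify coordinate-by-coordinate that $b_i^l(D') \in B_i^l(x)$. For every threshold $\theta_j$ entering the regulation function: if $x_j \neq \theta_j$, any $D' \in R(D)$ is on the same side of $\theta_j$ as $x$ (because regular domains in $R(D)$ have their $j$-th component equal to $D_j$ whenever $D_j$ is not itself a threshold singleton), and hence $s^+(D'_j,\theta_j) = s^+(x_j,\theta_j)$, matching the singleton $S^+(x_j,\theta_j)$; if $x_j = \theta_j$, then both $\{0,1\}$ choices are a priori possible and both lie in $S^+(x_j,\theta_j) = [0,1]$. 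Since interval arithmetic on the step-function intervals overapproximates the actual range of the algebraic expression $b_i^l$, any choice of in-interval values for each $s^\pm$ produces a Boolean value contained in $B_i^l(x)$; in particular the specific Boolean selection realised by $D'$ does.

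Combining these observations, $g_i(D',x) = \sum_l \kappa_i^l\, b_i^l(D') - \gamma_i x_i$ is an element of $\sum_l \kappa_i^l\, B_i^l(x) - \gamma_i x_i = F_i(x)$, because a Minkowski sum of intervals contains every sum of selections. Hence $\min_{D' \in R(D)} g_i(D',x)$ and $\max_{D' \in R(D)} g_i(D',x)$ both lie in the interval $F_i(x)$, and by convexity the entire interval $G_i(x)$ lies in $F_i(x)$, proving the proposition.

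The only genuinely delicate step is the coordinate-by-coordinate compatibility argument, specifically justifying that for a \emph{singular} domain $D$ the regular domains in $R(D)$ truly have their $j$-th coordinate fixed to the correct side of every threshold $\theta_j$ for which $x_j \neq \theta_j$. This is where I would lean on the precise definition of $R(D)$ from~\cite{HdJ2725}, and on the fact that the partition $\mathcal{D}$ is generated by the thresholds and focal planes, so that the only coordinates along which a neighbouring regular domain can differ from $D$ are exactly those where $D_i$ coincides with a threshold value.
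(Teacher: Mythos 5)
Your proposal is correct and follows essentially the same route as the paper's proof: establish $s^+(D',x_j,\theta_j)\in S^+(x_j,\theta_j)$ for each $D'\in R(D)$ by the same two-case analysis on whether $x_j=\theta_j$, lift this through interval arithmetic to $b_i^l(D',x)\in B_i^l(x)$ and hence $g_i(D',x)\in F_i(x)$, and conclude from the definition of $G_i(x)$. Your explicit appeal to the convexity of $F_i(x)$ to pass from the endpoints $\min$ and $\max$ to the whole interval is a point the paper leaves implicit, but it is not a different argument.
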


\begin{proof}[Sketch of the proof]\mbox{}\\
As was done for $g_i(D,x)$, we define $s^+(D,x_j,\theta_j)$ and $b_i^l(D,x)$ as $\lim_{y\to x, y\in D} s^+(y_j,\theta_j)$ and $\lim_{y\to x, y\in D} b_i^l(y)$, respectively.
For any $x\in D$, $D\subseteq \Omega$, one can easily show that
$$ \forall D'\in R(D),\; s^+(D',x_j,\theta_j)\in S^+(x_j,\theta_j)$$
If $D_j\neq\{\theta_j\}$, then this obviously holds, as $S^+(x_j,\theta_j)=[0,0]$ (or $[1,1]$) and for all $D'\in R(D)$, $s^+(D',x_j,\theta_j)=0$ (or $1$).
If $D_j=\{\theta_j\}$, then $S^+(x_j,\theta_j)=[0,1]$, and necessarily, $s^+(D',x_j,\theta_j)\in S^+(x_j,\theta_j)$ for all $D'\in R(D)$.
Then, from interval arithmetics, we have
$$ \forall D'\in R(D),\; b_i^l(D',x)\in B_i^l(x)$$
and finally
$$ \forall D'\in R(D),\; g_i(D',x) \in F_i(x)$$
From the definition of $G_i(x)$, we conclude that $G_i(x)\subseteq F_i(x)$.
\end{proof}

In most cases, \emph{the inclusion of Prop~\ref{prop:conservativeness} is an equality}.
In fact strict inclusion can arise only in two cases.
The first one occurs when a protein has a dual role (activator and inhibitor) with a same activity threshold on a single promoter. For example, this is the case if some $b_i^l$ term equals $s^+(x_j,\theta_j)\cdot(1-s^+(x_j,\theta_j))$. Indeed, for $x_j=\theta_j$, it holds that $B_i^l(x)=[0,1]$ whereas $\max_{D'\in R(D)} b_i^l(D',x)=0$.  
The second case occurs when a protein has a dual role with a same activity threshold on two different promoters of a gene. For example, this is the case if $f_i(x)$  equals $\kappa_i^1 s^+(x_j,\theta_j) + \kappa_i^2 (1-s^+(x_j,\theta_j))-\gamma_i x_i$. Indeed, for $x_j=\theta_j$ it holds that $F_i(x)=[0,\kappa_i^1+\kappa_i^2] - \gamma_i x_i$, whereas $G_i(x)=[\kappa_i^1, \kappa_i^2] - \gamma_i x_i$, assuming $\kappa_i^1< \kappa_i^2$.
These rare cases appear in none of the PADE models developed so far and distributed with GNA, and might be considered as modeling problems.

\bibliographystyle{plain} 
\bibliography{RR-7284}

\newpage
\tableofcontents

\end{document}